\definecolor{webgreen}{rgb}{0,.5,0}
\definecolor{webbrown}{rgb}{.6,0,0}
\newcommand{\ie}{i.e.,}
\newcommand{\eg}{e.g.,}
\newcommand{\etal}{et al.}
\newcommand{\seqnum}[1]{\href{https://oeis.org/#1}{\underline{#1}}}
\def\modd#1 #2{#1\ \mbox{\rm (mod}\ #2\mbox{\rm )}}
\def\modd#1 #2{#1\ \mbox{\rm (mod}\ #2\mbox{\rm )}}
\newcolumntype{L}[1]{>{\raggedright\let\newline\\\arraybackslash\hspace{0pt}}m{#1}}
\begin{document}

\theoremstyle{plain}
\newtheorem{theorem}{Theorem}
\newtheorem{corollary}[theorem]{Corollary}
\newtheorem{lemma}[theorem]{Lemma}
\newtheorem{proposition}[theorem]{Proposition}

\theoremstyle{definition}
\newtheorem{definition}[theorem]{Definition}
\newtheorem{example}[theorem]{Example}
\newtheorem{conjecture}[theorem]{Conjecture}

\theoremstyle{remark}
\newtheorem{remark}[theorem]{Remark}

\title{New Bounds on Antipowers in Words}

\author{Lukas Fleischer, Samin Riasat, and Jeffrey Shallit\footnote{Research supported in part by NSERC Grant 2018-04118.} \\
School of Computer Science \\
University of Waterloo\\
Waterloo, ON  N2L 3G1 \\
Canada\\
{\tt \{lukas.fleischer,samin.riasat,shallit\}@uwaterloo.ca}}

\maketitle

\begin{abstract}
Fici et al.~defined a word to be a $k$-power if it is the concatenation of $k$ consecutive identical blocks, and an $r$-antipower if it is the concatenation of $r$ pairwise distinct blocks of the same size.  They defined $N(k,r)$ as the smallest $\ell$ such that every binary word of length $\ell$ contains either a $k$-power or an $r$-antipower.  In this note we obtain some new upper and lower bounds on $N(k,r)$.  We also consider avoiding $3$-antipowers and $4$-antipowers over larger alphabets, and obtain a lower bound for $N(k,5)$ in the binary case.
\end{abstract}

\section{Introduction}

Regularities and repetitions have been studied extensively in the field of combinatorics on words.
One of the early results in the area is Thue's observation that while every sufficiently long binary word contains a square, in contrast there are arbitrarily long words over a ternary alphabet avoiding squares ~\cite{Berstel:1995,Thue:1906, Thue:1912}. In this context, \emph{avoidance} of a certain set of words means that none of the words of this set appears as a factor.
Thue's results show that avoidance of powers depends on the alphabet size. In this note, we focus solely on binary words.
The study of avoidance of patterns has been extended to \emph{$k$-powers}, \ie~nonempty words of the form $u^k = \overbrace{uu\cdots u}^k$, and other variants of the problem; see \eg~\cite{Dekking76,EntringerJS74,Fraenkel&Simpson:1994}.

In~\cite{FiciRSZ16}, Fici \etal~introduced the notion of antipowers.
Whereas a power is a sequence of adjacent blocks that are all the same, an antipower is a sequence of consecutive adjacent blocks of the same length that are pairwise different.
Formally, a \emph{$k$-antipower} is a word of the form $u_1 \cdots u_k$ such that $|u_1| = \cdots = |u_k|$ and the factors $u_1, \dots, u_k$ are pairwise distinct, \ie~$|\{u_1, \dots, u_k\}| = k$.

Fici \etal~also suggested investigating the simultaneous avoidance of powers and antipowers, which is the main topic of this work.
They defined $N(k,r)$ to be the shortest length~$\ell$ such that every binary word of length $\ell$ contains either a $k$-power or an $r$-antipower as a factor.
It is known that $N(k, r)$ is bounded polynomially in $k$ and $r$~\cite{FiciRSZ16}.
The available numerical evidence~(see \cite{Riasat19}) suggests that for each
fixed $r \geq 2$ we have 
$$(2r-4) k \le N(k,r) \le (2r-4) k + f(r)$$
for some function $f \colon \mathbb{N} \to \mathbb{N}$.
We first prove the lower bound $N(k,r) \ge (2r-4)k$ for $k \ge 4$.
Then we show that this bound is tight for $r = 3$.

We then address avoiding $3$-antipowers and $4$-antipowers over arbitrary alphabets.   Finally, we close with a discussion of avoiding $5$-antipowers over the binary alphabet.

This paper is based, in part, on the master's thesis of the second author \cite{Riasat19}.

Throughout the paper, we use $\varepsilon$ to denote the empty word. Moreover, for a word $w$, we use $w^*$ to denote the set of words $\{ \varepsilon, w, w^2, w^3, \dots \}$.   By $x^\omega$ we mean the (one-sided) infinite word $xxx\cdots$.

\section{A lower bound on $N(k,r)$}

To prove the desired lower bound, we give an explicit family of binary words $(x_{k,r})$ and use combinatorial arguments to show that $x_{k,r}$ avoids $k$-powers and $r$-antipowers. In what follows, we use the standard notation for regular expressions, as discussed (for example) in~\cite{Hopcroft&Ullman:1979}.

\begin{theorem}
Let $r \geq 3$ and $k \geq \max\{r-1,4\}$.  Define
$x_{k,r} = ((01)^{k-1} 00)^{r-3} (01)^{k-1} 0$.  Then
$|x_{k,r}| = 2k(r-2)-1$, but $x_{k,r}$ has
no $k$-powers, nor $r$-antipowers.
\label{thm:lower}
\end{theorem}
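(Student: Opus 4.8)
The plan is to record the length and two structural facts first, then dispatch powers and antipowers separately.

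First I would compute $|x_{k,r}|$ directly: since $|(01)^{k-1}00|=2k$ and $|(01)^{k-1}0|=2k-1$, we get $|x_{k,r}|=2k(r-3)+(2k-1)=2k(r-2)-1$. The key observation is that $x_{k,r}$ is exactly the length-$(2k(r-2)-1)$ prefix of the purely periodic word $y=((01)^{k-1}00)^\omega$, which has period $2k$; hence $x_{k,r}$ inherits this period, i.e.\ $x_{k,r}[i]=x_{k,r}[i+2k]$ whenever both are defined. I would also note two facts about $y$ that transfer to $x_{k,r}$: it contains no factor $11$ (every $1$ is isolated), and the only factor $00$ occurs inside one of the blocks $000$ produced at the $r-3$ junctions between consecutive copies of $(01)^{k-1}00$. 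In particular $x_{k,r}$ contains exactly $r-3$ occurrences of $000$ and no run of $0$s of length exceeding $3$.

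For the absence of $k$-powers, suppose $u^k$ is a factor and set $p=|u|$. If $p=1$ then $u^k\in\{0^k,1^k\}$, impossible since $k\ge 4$ exceeds the longest run (length $3$). If $p=2$ then $u^k$ is an alternating factor of length $2k$; but because there is no $11$ and the only $00$s sit at the junctions, the longest alternating factor of $x_{k,r}$ has length $2k-1$, a contradiction. For $p\ge 3$ I would invoke the standard fact that if $z$ is primitive of length $q$ and $w$ is a factor of $z^\omega$ with $|w|\ge q+p$ and period $p$, then $q\mid p$; one compares the two length-$q$ windows of $w$ starting at positions $1$ and $p+1$, which are equal (period $p$) and are both rotations of the primitive $z$, forcing $p\equiv 0\pmod q$. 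Here $z=(01)^{k-1}00$ is primitive, which I would check by counting that it has $k-1$ ones and $k+1$ zeros, so any factorization $z=v^n$ forces $n\mid(k-1)$ and $n\mid(k+1)$, hence $n\mid 2$, and the case $n=2$ fails because the first $k$ letters are alternating while the last $k$ letters end in $00$. Applying the fact with $q=2k$ and $w=u^k$ (the inequality $pk\ge 2k+p$ holds for $p\ge 3,\ k\ge 4$) yields $2k\mid p$, so $|u^k|=pk\ge 2k^2$; since $k\ge r-1$ gives $r-2<k$, we have $|x_{k,r}|=2k(r-2)-1<2k^2$, a contradiction.

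For the absence of $r$-antipowers, suppose $v=u_1\cdots u_r$ is one, with $|u_i|=m$ and the $u_i$ pairwise distinct. Since the alphabet is binary, $m=1$ allows only two distinct blocks, so $m\ge 2$. A block containing no $00$ contains no $11$ either, hence has no two equal adjacent letters and is therefore one of the only two alternating words of length $m$; as the blocks are distinct, at most two of them are alternating, so at least $r-2$ of the $r$ blocks contain a factor $00$. On the other hand, every occurrence of $00$ lies inside one of the $r-3$ runs $000$, and each such run supplies a $00$ to at most one of the pairwise-disjoint blocks: its three positions meet at most two blocks, and whenever a block boundary cuts through the run the minority side retains only a single $0$ (the three-way split requires $m=1$, which is excluded). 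Mapping each $00$-containing block to a run supplying its $00$ is then injective, so at most $r-3$ blocks contain a $00$, contradicting the bound $r-2$ and completing the argument.

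The step I expect to be most delicate is the antipower count, namely the clean verification that the $r-3$ runs $000$ can furnish a factor $00$ to at most $r-3$ of the blocks; this rests on analysing how a run of three $0$s can be split by the disjoint block boundaries when $m\ge 2$. By comparison, the primitivity of $(01)^{k-1}00$ and the estimate on the longest alternating factor are routine.
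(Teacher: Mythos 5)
Your proposal is correct, and it splits into two halves of different character relative to the paper. The antipower half is essentially the paper's own argument: at most $r-3$ blocks can contain $00$ because each of the $r-3$ runs $000$ can feed a $00$ to at most one block, every remaining block is alternating, and there are only two alternating words of each length. You phrase this contrapositively (at most two alternating blocks, hence at least $r-2$ blocks containing $00$, contradiction) and, usefully, you actually prove the block-to-run injectivity that the paper merely asserts. The $k$-power half, however, takes a genuinely different route. The paper argues about spacing of occurrences of $00$: every length-$2k$ factor of $x_{k,r}$ contains $00$, so $y^3$ would contain two occurrences of $00$ at distance $|y|$, which is claimed to force $|y| \ge 2k$, after which the length count $2k^2 > 2k(r-2)-1$ finishes. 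You instead view $x_{k,r}$ as a prefix of $z^\omega$ with $z = (01)^{k-1}00$, prove $z$ primitive by counting occurrences of each letter, and invoke the synchronization property of primitive words (equal length-$q$ windows of $z^\omega$ at distance $p$ force $q \mid p$) to get the exact divisibility $2k \mid |u|$ when $|u| \ge 3$, dispatching $|u| \le 2$ via the run-length and longest-alternating-factor bounds. Your route is slightly longer but arguably more airtight: the paper's claim that two occurrences of $00$ at distance less than $2k$ cannot occur in $x_{k,r}$ is not literally true (distances $1$ and $2k-1$ do occur there), so the paper's argument needs a small patch to exclude $|y| = 2k-1$, whereas your divisibility conclusion requires no such case analysis. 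One cosmetic point: in the case $|u| = 2$ you should first discard $u \in \{00, 11\}$ (immediate from your recorded facts, since $x_{k,r}$ has no $11$ and no run of zeros longer than $3$) before asserting that $u^k$ is alternating.
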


\begin{proof}
First, we argue that $x_{k,r}$ contains no $k$-power $y^k$.
Suppose it does.
Clearly, $x_{k,r}$ does not contain $0000$ or $1111$, so $|y| \ge 2$. However, every factor of length $2k$ of $x_{k,r}$ contains the word $00$, and so $y^k$ contains $00$.  
Thus, $y^2$ must also contain the factor $00$ and $y^3$ must contain $00$ at least twice. If two occurrences of $00$ appear at a distance of less than $2k$ in $y^3$, then $y^k$ cannot be a factor of $x_{k,r}$.
Therefore, $y$ must have length at least $2k$. But then $y^k$ has length at least $2k^2 \ge 2k(r-1) = 2k(r-2) + 2k > |x_{k,r}|$, contradicting our assumption.

Next, we argue that $x_{k,r}$ contains no $r$-antipower.
First, observe that in any sequence of non-overlapping blocks in $x_{k,r}$, at most $r-3$ blocks can contain $00$ as a factor. All remaining blocks belong to the set $\{\varepsilon,1\}(01)^*\{\varepsilon,0\}$. Since the set $\{\varepsilon,1\}(01)^*\{\varepsilon,0\}$ only contains two words of each length, this implies that any sequence of $r$ consecutive blocks of the same length contains at least two identical blocks. We conclude that $x_{k,r}$ contains no $r$-antipower.
\end{proof}

This theorem immediately yields a lower bound that matches our conjectured upper bound up to an additive term that only depends on $r$.

\begin{corollary}
$N(k,r) \geq (2r-4)k$
for all $k \geq \max\{r-1,4\}$.
\label{crl:lower}
\end{corollary}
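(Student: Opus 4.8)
The plan is to read the corollary off directly from Theorem~\ref{thm:lower} by carefully unwinding the definition of $N(k,r)$. Recall that $N(k,r)$ is the smallest length $\ell$ for which \emph{every} binary word of length $\ell$ contains a $k$-power or an $r$-antipower. The first thing I would record is that this property is monotone in $\ell$: any binary word of length $\ell' > \ell$ contains a factor of length $\ell$, and a $k$-power or $r$-antipower occurring in that factor also occurs in the ambient word. Hence the set of lengths enjoying the forcing property is upward closed, with minimum $N(k,r)$, so exhibiting a single long word that avoids both structures pins $N(k,r)$ strictly above that word's length.

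Concretely, I would invoke Theorem~\ref{thm:lower}, which applies since $r \ge 3$ and $k \ge \max\{r-1,4\}$, to obtain the word $x_{k,r}$ of length exactly $2k(r-2)-1$ containing neither a $k$-power nor an $r$-antipower. Thus the length $\ell = 2k(r-2)-1$ is \emph{not} one of the lengths that forces such a factor, because $x_{k,r}$ is a counterexample at that length. By the monotonicity remark above, every length below $N(k,r)$ fails to force such a factor while every length at least $N(k,r)$ succeeds; since $2k(r-2)-1$ fails, it must lie strictly below $N(k,r)$. Therefore $N(k,r) > 2k(r-2)-1$, and as $N(k,r)$ is an integer this yields $N(k,r) \ge 2k(r-2) = (2r-4)k$, as claimed.

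In truth there is essentially no obstacle here: all the combinatorial content already lives in Theorem~\ref{thm:lower}. The only point that requires genuine care is the correct reading of the definition of $N(k,r)$ — namely that a \emph{lower} bound on $N(k,r)$ corresponds to an \emph{upper} bound on the length of an avoiding word — together with the brief monotonicity observation that licenses passing from the single witness $x_{k,r}$ to a bound on $N(k,r)$ itself.
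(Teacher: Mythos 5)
Your proposal is correct and matches the paper's approach: the paper derives Corollary~\ref{crl:lower} immediately from Theorem~\ref{thm:lower}, since the word $x_{k,r}$ of length $2k(r-2)-1$ avoiding both structures forces $N(k,r) \geq 2k(r-2) = (2r-4)k$. Your explicit monotonicity remark is a careful (and harmless) elaboration of what the paper leaves implicit.
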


\begin{remark} Since 
$N(2,4) = 4$ and
$N(3,4) = 19$, Corollary~\ref{crl:lower} gives
$N(k,4) \geq 4k$ for
$k \geq 3$.  Computations show that $N(k,4) = 4k$ for $11 \leq k \leq 30$.    We conjecture
that this equality holds for all $k \geq 11$.  

\end{remark}

\section{Binary words avoiding $3$-antipowers}

For a language $L \subseteq \{0,1\}^*$, we let $\mathcal{C}_n(L)$ denote the (transitive) closure of $L \cap \{0,1\}^n$ under bitwise complementation and reversal.
It consists of all length-$n$ words from $L$, all bitwise complements of length-$n$ words from $L$, all reversed length-$n$ words from $L$ and all bitwise complements of reversed length-$n$ words from $L$.
It is easy to see that if a binary word contains a $k$-power, then both its bitwise complement and its reversal also contain a $k$-power. The same argument applies to $r$-antipowers.
Together with the fact that bitwise complementation and reversal are involutions, this implies that $\mathcal{C}_n(L)$ avoids $k$-powers (resp., $r$-antipowers) if and only if $L \cap \{0, 1\}^n$ avoids $k$-powers (resp., $r$-antipowers).

The definition of $\mathcal{C}_n$ allows us to give a simple description of all binary words avoiding $3$-antipowers.
Before giving the general result, we only characterize words avoiding $3$-antipowers whose lengths are multiples of $3$.

\begin{theorem}
  Let $n \ge 18$ be divisible by $3$.
  Let $A_n$ be the set of binary words of length~$n$ avoiding $3$-antipowers.
  Then $A_n = \mathcal{C}_n\left(0^* \cup (01)^* \cup (01)^* 0 \cup 0^*10^* \cup 0^*011 \cup 0^*101\right)$.
  \label{thm:three-antipowers}
\end{theorem}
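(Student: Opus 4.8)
The plan is to establish the set equality by proving the two inclusions $\mathcal{C}_n(L)\subseteq A_n$ and $A_n\subseteq\mathcal{C}_n(L)$ separately, where $L=0^*\cup(01)^*\cup(01)^*0\cup 0^*10^*\cup 0^*011\cup 0^*101$.

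For $\mathcal{C}_n(L)\subseteq A_n$ I would invoke the closure observation stated just before the theorem: $\mathcal{C}_n(L)$ avoids $3$-antipowers if and only if $L\cap\{0,1\}^n$ does. It therefore suffices to verify each of the six generators. The words $0^n$ and the alternating words in $(01)^*\cup(01)^*0$ have only two distinct factors of each length $m$, so among any three consecutive equal-length blocks two must coincide. For $0^a10^b$, $0^{n-2}11$, and $0^{n-3}101$ all occurrences of $1$ lie in a suffix of length at most $3$; hence in any window of length $3m$ with $m\ge 2$ the first two length-$m$ blocks are both $0^m$, while $m=1$ is vacuous over a two-letter alphabet. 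This disposes of the easy inclusion.

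The real content is $A_n\subseteq\mathcal{C}_n(L)$. I would organize the argument according to which of the factors $00$ and $11$ occur in a given $w\in A_n$, using complementation (which $\mathcal{C}_n$ absorbs) to identify the case ``$00$ present, $11$ absent'' with its mirror. If neither $00$ nor $11$ occurs then consecutive letters always differ, so $w$ is alternating and lies in $\mathcal{C}_n((01)^*\cup(01)^*0)$. If $00$ occurs but $11$ does not, then the $1$'s of $w$ are isolated, and the task is to bound their number and pin down their location. The decisive tool here is the very restrictive $m=2$ and $m=3$ conditions: every factor of length $6$ must repeat one of its three $2$-blocks, and likewise for length $9$. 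For instance, a pair of isolated $1$'s with several $0$'s between them produces a factor such as $001000100$, whose blocks $001,000,100$ are pairwise distinct, while three boundary-clustered isolated $1$'s produce $000010101$ with distinct blocks $000,010,101$. Ruling these and all similar configurations out should force $w$ to have at most two $1$'s, pushed to one end in the pattern $0^*$, $0^*10^*$, or $0^*101$. The case where both $00$ and $11$ occur is similar: factors such as $001011$ and $000111$ (with pairwise distinct blocks $00,10,11$ and $00,01,11$) show that the rarer symbol can occur at most twice and only as an adjacent pair at one end, leaving $w$ among $0^{n-2}11$ and its images in $\mathcal{C}_n(0^*011)$.

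I expect the main obstacle to be the propagation step in these last two cases: one must argue that the local window constraints, applied along the \emph{entire} word, genuinely exclude every ``spread-out'' placement of the minority symbol and not merely the small examples listed above, so that the only survivors are the alternating words and the boundary-clustered words already named. The hypothesis $n\ge 18$ enters precisely here, guaranteeing that the windows of length $6$ and $9$ always fit and that no sporadic short word escapes the classification; the divisibility of $n$ by $3$ makes the full-word split into three blocks of length $n/3$ available as an additional window when pinning down the exact boundary forms.
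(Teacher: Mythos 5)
Your proposal correctly sets up the two inclusions, and the easy direction ($\mathcal{C}_n(L)\subseteq A_n$) is fine in spirit: in each generator the minority symbol is confined enough that among any three consecutive equal-length blocks two coincide (your phrasing ``the first two blocks are both $0^m$'' is imprecise for $0^a10^b$, where the $1$ may lie in the first block, but this is trivially repaired). The target classification you aim for in the hard direction is also the right one. The problem is that the hard direction is exactly the content of the theorem, and you leave it at ``ruling these and all similar configurations out \emph{should} force $w$'' into the listed forms --- and, worse, the tools you designate as decisive (the $m=2$ and $m=3$ window constraints, plus the full-word split into three blocks) are provably insufficient. Concretely, the words $00010100010$ and $00100000100$ (length $11$, both appearing in the paper's Remark~\ref{remark9} table) avoid $3$-antipowers entirely, hence pass every length-$6$ and length-$9$ window test, yet have their $1$'s spread out and are not of the claimed form. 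Their extensions to length $\ge 12$ are only eliminated by \emph{longer} windows (e.g., $001000001000$ splits into the pairwise distinct blocks $0010$, $0000$, $1000$). So any correct argument must exploit windows of unboundedly many sizes simultaneously, and your proposal offers no mechanism for organizing that case analysis; this is not a detail to be filled in but the entire difficulty. Relatedly, your explanation of the hypothesis $n\ge 18$ (``so that the windows of length $6$ and $9$ fit'') cannot be right, since those windows fit already for $n\ge 9$, and the sporadic words above show the classification genuinely fails below length $12$.

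The paper supplies precisely the missing organizing principle: induction on $n$ in steps of $3$, with the base case $n=18$ verified by exhaustive enumeration. For the inductive step one writes $w=xyz$ with $|x|=|z|=3$; the two overlapping factors $xy$ and $yz$ have length $n-3$ and avoid $3$-antipowers, so by the induction hypothesis each already has one of the known shapes. Intersecting the possibilities for $xy$ and $yz$ (after normalizing by complementation and reversal) pins down $y$, and then $w$ itself, with only a short case analysis --- the induction hypothesis implicitly encodes the effect of windows of every length, which is what your fixed-size local constraints cannot do. If you want to salvage your direct approach, you would have to prove a propagation lemma showing that a spread-out minority symbol at distance $d$ from other occurrences forces a $3$-antipower using a window of length comparable to $d$, uniformly in $d$; the paper's induction is in effect a cleaner substitute for such a lemma, at the price of a finite base-case computation.
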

\begin{proof}
  Note that it is easy to verify the claim for $n = 18$ by enumerating all words of this length. For larger values, we prove the claim by induction.

  Let $n > 18$ with $n \equiv \modd{0} {3}$.
  By induction, all factors of length $n - 3$ avoid $3$-antipowers. Moreover, it is easy to see that when splitting any word from $A_n$ into three blocks of equal length, at least two of these blocks coincide. Therefore, all words in $A_n$ avoid $3$-antipowers.
  It remains to show that if $w \in \{0, 1\}^n$ avoids $3$-antipowers, then $w \in A_n$. Since avoidance of $3$-antipowers is invariant under bitwise complementation, we may assume that $|w|_0 \ge |w|_1$.
  Moreover, if $|w|_0 = |w|_1$, we may assume that $w$ starts with a $0$.
  
  We factorize $w = xyz$ with $|x| = |z| = 3$. By induction, $xy$ and $yz$ belong to $A_{n-3}$. Since $w$ contains at least as many zeroes as ones, and starts with a $0$ if the number of zeroes equals the number of ones, this implies that $y \in 0^* \cup (10)^* \cup (10)^* 1 \cup 0^*10^*$.
  
  If $y \in 0^*10^*$, then $x = z = 000$, otherwise either $xy \not\in A_{n-3}$ or $yz \not\in A_{n-3}$.
  This implies $w \in 0^*10^*$, thus $w \in A_n$.
  Similarly, if $y \in (10)^* \cup (10)^* 1$, then $w \in (01)^* \cup (01)^* 1$ and $w \in A_n$.
  
  The remaining case is $y \in 0^*$. Since at least two of the factors $w_1, w_2, w_3$ in the unique factorization $w = w_1 w_2 w_3$ with $|w_1| = |w_2| = |w_3|$ must coincide, either $x = 000$ or $z = 000$. Avoidance of $3$-antipowers is invariant under reversal, so we may assume $x = 000$. Since $y \in 0^*$ implies that $yz$ belongs to $0^* \cup 0^*10^* \cup 0^*011 \cup 0^*101$ and this set is closed under prepending zeroes, we obtain that $w$ has the desired form. This concludes the proof.
\end{proof}

We now extend this characterization to words whose lengths are not divisible by $3$.

\begin{corollary}
  Let $A_n$ be the set of binary words of length~$n$ avoiding $3$-antipowers. Let $k \ge 6$. Then
  \begin{itemize}
      \item $A_{3k} = \mathcal{C}_{3k}\left(0^* \cup (01)^* \cup (01)^* 0 \cup 0^*10^* \cup 0^*011 \cup 0^*101\right)$,
      \item $A_{3k+1} = \mathcal{C}_{3k+1}\left(0^* \cup (01)^* \cup (01)^* 0 \cup 0^*10^* \cup 0^*011 \cup 0^*101 \cup 10^*1\right)$ and
      \item $A_{3k+2} = \mathcal{C}_{3k+2}\left(0^* \cup (01)^* \cup (01)^* 0 \cup 0^*10^* \cup 0^*011 \cup 0^*101 \cup 10^*1 \cup 10^*10 \cup 1 0^*11 \right)$.
  \end{itemize}
  In particular, for $k \geq 6$ there are exactly $6k + 12$ (resp., $6k + 16$, $6k + 26$) binary words of length $3k$ (resp., $3k+1$, $3k+2$) avoiding $3$-antipowers.
  \label{crl:three-antipowers}
\end{corollary}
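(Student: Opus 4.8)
The plan is to reduce each of the two remaining cases to Theorem~\ref{thm:three-antipowers}. The crucial observation is a bound on block sizes: a word of length $3k+1$ (resp.\ $3k+2$) can only contain a $3$-antipower whose blocks have length at most $k$, since three blocks of length $b$ require $3b \le 3k+1$ (resp.\ $3b \le 3k+2$), which forces $b \le k$. Hence every $3$-antipower occurrence is a factor of length at most $3k$, and each such factor is contained in one of the length-$3k$ factors of $w$ obtained by deleting a short prefix and suffix. Concretely, I would show that a word $w$ of length $3k+1$ avoids $3$-antipowers if and only if both of its length-$3k$ factors lie in $A_{3k}$, and that a word $w$ of length $3k+2$ avoids $3$-antipowers if and only if all three of its length-$3k$ factors lie in $A_{3k}$. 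Since $3k \ge 18$ for $k \ge 6$, Theorem~\ref{thm:three-antipowers} gives the exact form of these length-$3k$ factors, and the first bullet ($A_{3k}$) is simply Theorem~\ref{thm:three-antipowers} restated.

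With this reduction in hand, I would determine $A_{3k+1}$ and $A_{3k+2}$ by a finite case analysis. Working up to the symmetries packaged in $\mathcal{C}_n$ (bitwise complement and reversal, which preserve $3$-antipower avoidance), one runs through the possible shapes of the length-$3k$ prefix window given by Theorem~\ref{thm:three-antipowers} and checks which one-letter (resp.\ two-letter) extensions keep the remaining window(s) inside $A_{3k}$. Every word of $A_n$ arising this way either already belongs to one of the families of Theorem~\ref{thm:three-antipowers}, or is one of the new boundary words: for length $3k+1$ the only genuinely new shape is $10^*1$, and for length $3k+2$ the new shapes are $10^*1$, $10^*10$, and $10^*11$. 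A short direct verification confirms that these boundary words do avoid $3$-antipowers (their two or three $1$'s sit too far apart to land in distinct blocks of a common window of length at most $3k$), while the reduction above shows that no further shapes can occur.

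Finally, I would count the length-$n$ words in each language by inclusion--exclusion over the union of generators, after applying the $\mathcal{C}_n$-closure. The dominant contribution is the $6k$ coming from $0^*10^*$ together with its complement $1^*01^*$, \emph{i.e.} the words with exactly one $1$ or exactly one $0$, of which there are exactly $2n = 6k$. The remaining constant ($12$, $16$, or $26$) is contributed by the bounded families $0^*$, the single $(01)^*$- or $(01)^*0$-word of the appropriate length, $0^*011$, $0^*101$, and the new boundary words, each taken together with its complement and reversal.

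The main obstacle is purely the bookkeeping. On the structural side, the extension case analysis must be exhaustive, so that one is certain no additional shape sneaks in at the boundary. On the enumeration side, the delicate part is deduplication: when collapsing the $\mathcal{C}_n$-orbits one must ensure that no word is counted twice across two generators (for instance an all-$0$ word, or a word that is simultaneously a reversal image and a complement image), since words fixed by complement or reversal must not be overcounted. The parity of $k$ also has to be tracked, as exactly one of $(01)^*$ and $(01)^*0$ contributes a word of each admissible length.
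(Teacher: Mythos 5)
Your proposal follows essentially the same route as the paper's proof: reduce lengths $3k+1$ and $3k+2$ to Theorem~\ref{thm:three-antipowers} by examining the two (resp.\ three) length-$3k$ windows (your block-length bound $b \le k$ makes explicit why this suffices), then enumerate the surviving extensions and count $\mathcal{C}_n$-orbits while watching for words fixed by complementation or reversal. One bookkeeping slip to fix when you carry this out: the contribution of $\mathcal{C}_n(0^*10^*)$ is $2n$ (so $6k+2$ and $6k+4$ for the two longer lengths, not $6k$), which makes the residual constants $12$, $14$, $22$ rather than $12$, $16$, $26$; the totals $6k+12$, $6k+16$, $6k+26$ come out the same.
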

\begin{proof}
  For words of length $3k + 1$, it suffices to investigate their two factors of length~$3k$ and apply the previous theorem.
  Similarly, for words of length $3k + 2$, we investigate all three factors of length~$3k$.
  
  We now fix some $k \ge 6$ and count the number of words of length $3k$.
  Words from $0^*$ and $(01)^* 0$ coincide with their reverses and words from $(01)^*$ coincide with the bitwise complements of their reverses. Thus, $\mathcal{C}_{3k}\left(0^* \cup (01)^* \cup (01)^* 0\right)$ contains exactly four words.
  The set $\mathcal{C}_{3k}(0^*10^*)$ contains $6k$ words: all words of the form $0^i 1 0^j$ or $1^i 0 1^j$ with $0 \le i < 3k$ and $i + j = 3k - 1$.
  The sets $0^*011$ and $0^*101$ contain exactly one word of length $3k$; and the reverse, the bitwise complement and the bitwise complement of the reverse of each of these words are distinct from one another. Therefore, $\mathcal{C}_{3k}\left(0^*011 \cup 0^*101\right)$ contains eight words.
  Together, this shows that $|A_{3k}| = 4 + 6k + 8 = 6k + 12$.

  Since every word from $10^*1$ coincides with its reverse but not with its bitweise complement, we obtain $|A_{3k+1}| = 4 + (6k + 2) + 8 + 2 = 6k + 16$.
  Moreover, every word from $10^*10 \cup 1 0^*11$ does not coincide with its reverse, its bitwise complement or the bitwise complement of its reverse, so $|A_{3k+2}| = 4 + (6k + 4) + 8 + 2 + 8 = 6k + 26$.
\end{proof}

To extend Corollary~\ref{crl:three-antipowers} to infinite words, it suffices to determine all words whose factors belong to the set $A_n$ defined in Theorem~\ref{thm:three-antipowers}.

\begin{corollary}
The only infinite binary words avoiding $3$-antipowers are of the form
$0^\omega$, $(01)^\omega$,  $11 0^\omega$, $101 0^\omega$, $0^i 1 0^\omega$ for $i \geq 0$,
and their binary complements.
\end{corollary}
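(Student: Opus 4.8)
The plan is to characterize infinite binary words avoiding $3$-antipowers by leveraging the finite characterization from Corollary~\ref{crl:three-antipowers}. The key observation is that an infinite word $w$ avoids $3$-antipowers if and only if every finite factor of $w$ avoids $3$-antipowers; in particular, every factor of length $3k$ for large $k$ must lie in $A_{3k} = \mathcal{C}_{3k}\left(0^* \cup (01)^* \cup (01)^* 0 \cup 0^*10^* \cup 0^*011 \cup 0^*101\right)$. So I would fix a suitably large $k$ (say $k \geq 7$) and study which infinite words have \emph{every} length-$3k$ factor in this set.

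First I would note that, up to bitwise complementation, we may assume the infinite word $w$ contains at least as many zeroes as ones in a typical long factor, so each length-$3k$ factor lies in $0^* \cup (01)^* \cup (01)^* 0 \cup 0^*10^*$ or its reversal $10^*0^* \cdots$, rather than in the finite families $0^*011$ and $0^*101$ (which, being single words of each length, cannot recur as a window slides across an infinite word unless the word is eventually constant). The core of the argument is a \emph{window-sliding} analysis: as the length-$3k$ window advances one symbol at a time, consecutive windows overlap in $3k-1$ positions, so the sequence of window-types must be compatible. I would enumerate the possible transitions between the allowed types. For instance, if some window is a factor of $(01)^\omega$, then the overlap forces neighboring windows to also be periodic with period $01$, yielding $w = (01)^\omega$ (or its complement). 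Similarly, a window of the form $0^i 1 0^j$ with both $i,j$ large forces the word to be of the form $0^i 1 0^\omega$ or $0^\omega$.

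The main obstacle will be handling the transitional cases near the single-defect words $11 0^\omega$ and $101 0^\omega$: here the window-type changes as the defect enters and leaves the window, and I must verify that these finitely many short prefixes followed by $0^\omega$ are exactly the words that survive, with no other ``hybrid'' possibilities. Concretely, I would argue that if $w$ is not eventually constant and not equal to $(01)^\omega$ up to complementation, then $w$ contains only finitely many $1$'s (since two $1$'s at distance more than $1$ but bounded force a pattern like $0^*10^*$ in every window, incompatible with a third $1$ appearing later), and then a finite case check on the arrangement of these finitely many $1$'s—using closure of the defining set under prepending zeroes and the explicit short words $011$ and $101$—pins down the list $0^i 1 0^\omega$, $11 0^\omega$, and $101 0^\omega$.

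Finally, I would verify that each word on the claimed list (together with its complement) genuinely avoids $3$-antipowers, which is immediate: each such word has all but finitely many windows lying in $0^* \cup (01)^*$ and the finitely many exceptional windows are readily checked to lie in $A_{3k}$, so by Corollary~\ref{crl:three-antipowers} no length-$3k$ factor—and hence no factor of any length divisible by $3$—is a $3$-antipower. This completes the characterization.
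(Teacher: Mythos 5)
Your starting point---reducing to the finite characterization of Corollary~\ref{crl:three-antipowers}---is exactly what the paper intends (the paper leaves the proof implicit, saying only that it suffices to determine which words have all their factors in $A_n$). But there is a genuine gap in your execution: you \emph{fix a single} window length $3k$ and classify the infinite words all of whose length-$3k$ factors lie in $A_{3k}$. That condition is strictly weaker than avoiding $3$-antipowers, so the class it defines is strictly larger than the claimed list, and several of your intermediate claims are false for it. Concretely, take $w = (0^{3k-1}1)^\omega$. Any $3k$ consecutive positions contain exactly one position $\equiv 3k-1 \pmod{3k}$, so every length-$3k$ factor of $w$ contains exactly one $1$ and lies in $0^*10^* \subseteq A_{3k}$; yet $w$ is not on the list and contains a $3$-antipower: for $3k = 18$, the factor of length $21$ starting at position $15$ splits into the blocks $0010000$, $0000000$, $0000001$, which are pairwise distinct. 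This shows that a window of the form $0^i10^j$ with $i,j$ large does \emph{not} force $w \in \{0^{i'}10^\omega, 0^\omega\}$ (the $1$ simply slides out of the window, and a new $1$ may enter more than $3k$ positions later), and that ``incompatible with a third $1$ appearing later'' fails for the same reason. The same flaw appears in your final verification step: for a fixed $k$, ``no length-$3k$ factor is a $3$-antipower'' does not imply ``no factor of any length divisible by $3$ is a $3$-antipower''---a $3$-antipower with blocks longer than $k$ is invisible at window length $3k$.

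The repair is to quantify over \emph{all} lengths: $w$ avoids $3$-antipowers if and only if for every $k \ge 6$ its length-$3k$ prefix lies in $A_{3k}$ (prefixes suffice, since any $3$-antipower factor of $w$ sits inside a sufficiently long prefix, and members of $A_{3k}$ avoid $3$-antipowers by definition). Working with prefixes also makes the classification easy, because prefixes are nested: the family within the union defining $A_{3k}$ to which the prefix belongs must extend consistently as $k$ grows, and a short case analysis of which families admit arbitrarily long consistent extensions (for instance, $0^{18}011$ lies in $A_{21}$ but extends to no word of $A_{24}$, whereas $110^{3k-2}$ extends to $110^{3k+1}$) yields exactly $0^\omega$, $(01)^\omega$, $0^i10^\omega$, $110^\omega$, $1010^\omega$ and their complements. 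With that change your window/prefix ideas go through, but as written the fixed-$k$ argument cannot establish the corollary.
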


Using Corollary~\ref{crl:three-antipowers}, we can prove a tight upper bound for $N(k, 3)$.

\begin{theorem}
  $N(k, 3) = 2k$ for all $k \ge 9$.
\end{theorem}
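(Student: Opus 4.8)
The plan is to prove the two inequalities $N(k,3) \ge 2k$ and $N(k,3) \le 2k$ separately.

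For the lower bound I would simply invoke the family of Theorem~\ref{thm:lower} (equivalently Corollary~\ref{crl:lower}) specialized to $r = 3$. There $x_{k,3} = (01)^{k-1}0$ is a binary word of length $2k-1$ containing neither a $k$-power nor a $3$-antipower, so no word of length $2k-1$ forces one of these; hence $N(k,3) \ge 2k$. Since $k \ge 9 \ge 4$, the hypotheses of the corollary are met.

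For the upper bound the strategy is to show that every binary word $w$ of length $2k$ avoiding $3$-antipowers necessarily contains a $k$-power, which gives $N(k,3) \le 2k$. The key tool is Corollary~\ref{crl:three-antipowers}, which lists all such words explicitly. Because $k \ge 9$ we have $|w| = 2k \ge 18$, so the classification applies and $w$, up to the closure operations of bitwise complementation and reversal, lies in one of the families $0^*$, $(01)^*$, $(01)^*0$, $0^*10^*$, $0^*011$, $0^*101$, and, depending on the residue of $2k$ modulo $3$, possibly also $10^*1$, $10^*10$, or $10^*11$.

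I would then dispose of these families in two groups. The periodic families are immediate: the only length-$2k$ words in $0^*$ and $(01)^*$ are $0^{2k} = (00)^k$ and $(01)^k$, both $k$-powers, and the same holds after complementing or reversing; the family $(01)^*0$ contributes only odd-length words, hence none of length $2k$. For every remaining family, the defining expression admits only a bounded number of symbols outside a single maximal run, so a length-$2k$ word must contain a run of at least $k$ equal letters. For instance, a word $0^i 1 0^j \in 0^*10^*$ has $i+j = 2k-1$ and thus $\max(i,j) \ge k$, while representatives such as $0^{2k-3}011$, $0^{2k-3}101$, $10^{2k-2}1$, $10^{2k-3}10$, and $10^{2k-3}11$ each contain a run of at least $k$ zeros once $k \ge 3$. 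Such a run yields $0^k$, or $1^k$ after complementation, as a factor, which is a $k$-power. This shows that every length-$2k$ word avoiding $3$-antipowers contains a $k$-power, completing the upper bound.

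The main obstacle is bookkeeping rather than conceptual difficulty: one must run through every family in Corollary~\ref{crl:three-antipowers} together with its image under the closure $\mathcal{C}_n$ and confirm in each case that the length-$2k$ representatives meet the run-length threshold $k$. Once the single observation that, outside the two genuinely periodic families, every word contains a run of length at least $k$ is isolated, the verification is routine; the role of the hypothesis $k \ge 9$ is precisely to guarantee $2k \ge 18$, so that the classification corollary is in force.
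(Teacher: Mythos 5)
Your proof is correct and takes essentially the same route as the paper: the lower bound is quoted from Corollary~\ref{crl:lower}, and the upper bound follows by running through the classification of Corollary~\ref{crl:three-antipowers} and verifying that every length-$2k$ word avoiding $3$-antipowers contains a $k$-power, either as $(01)^k$/$(10)^k$ or as a run $0^k$/$1^k$. The paper organizes the same case check by the number of ones in the word rather than family by family, but this is only a difference in bookkeeping.
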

\begin{proof}
  The lower bound follows from Corollary~\ref{crl:lower}. Thus, it suffices to show that every binary word of length $2k$
  for $k \geq 9$, contains either a $k$-power or a $3$-antipower.

  Let $w \in \{0, 1\}^{2k}$ be a word that avoids $3$-antipowers.
  By closure of the sets of $k$-powers and $3$-antipowers under bitwise complementation, we may assume that $|w|_0 \ge |w|_1$.
  If $|w|_1 \le 1$, then $0^k$ is a factor of $w$.
  If $|w|_1 > 3$, then $w \in \{(01)^k, (10)^k\}$ by Corollary~\ref{crl:three-antipowers}.
  Moreover, if $|w|_1 \in \{2, 3\}$, then $w = xyz$ for some $x, y, z \in \{0, 1\}^*$ with $|x|, |z| \le 3$ and $y = 0^{2k-|xz|}$.
  In any case, $w$ contains a $k$-power.
\end{proof}

\begin{remark}
Our results above only hold for words of length at least $18$.
With a slightly more technical proof or by computing all words of length less than $18$ avoiding $3$-antipowers, it can be shown that the characterization given in Corollary~\ref{crl:three-antipowers} can be extended to all words of length at least~$12$.
Moreover, it is easy to see that all words of length $\le 5$ avoid $3$-antipowers.
For the remaining lengths, the following table lists all words avoiding $3$-antipowers up to bitwise complementation and reversal.

\begin{center}
\begin{tabular}{|c|L{14.95cm}|}
$n$ & Words of length $n$ avoiding $3$-antipowers, up to complementation and reversal \\
\hline
6 & \footnotesize $000000$, $000100$, $001100$, $000010$, $001010$, $010110$, $000001$, $010001$, $011001$, $000101$, $100101$, $010101$, $000011$ \\
7 & \footnotesize $0000000$, $0001000$, $0000100$, $0010100$, $0000010$, $0100010$, $0001010$, $0101010$, $0000001$, $1000001$, $0010001$, $0101001$, $0011001$, $0000101$, $1000101$, $0010101$, $0000011$ \\
8 & \footnotesize $00000000$, $00001000$, $00000100$, $00010100$, $00000010$, $00100010$, $00001010$, $00101010$, $01100110$, $01010110$, $00000001$, $10000001$, $01000001$, $00010001$, $01010001$, $00101001$, $10011001$, $00000101$, $01000101$, $00010101$, $10010101$, $01010101$, $00000011$, $10000011$, $00110011$, $00101011$ \\
9 & \footnotesize $000000000$, $000010000$, $000001000$, $000101000$, $000000100$, $000000010$, $010000010$, $010001010$, $010101010$, $000000001$, $001000001$, $001010001$, $000000101$, $101000101$, $000000011$ \\
10 & \footnotesize $0000000000$, $0000010000$, $0000001000$, $0000000100$, $0000000010$, $0010000010$, $0010100010$, $0000000001$, $1000000001$, $0001000001$, $0001010001$, $0000000101$, $0101000101$, $0101010101$, $0000000011$ \\
11 & \footnotesize $00000000000$, $00000100000$, $00000010000$, $00000001000$, $00000000100$, $00100000100$, $00000000010$, $00010000010$, $00010100010$, $01010001010$, $01010101010$, $00000000001$, $10000000001$, $01000000001$, $00001000001$, $10001010001$, $00000000101$, $00101000101$, $00000000011$, $10000000011$
\end{tabular}
\end{center}
\label{remark9}
\end{remark}

\section{Avoiding $3$-antipowers over larger alphabets}

Call a word $w \in \Sigma_k^*$ {\it orderly\/} if
$i < j$ implies that the first occurrence of $i$ in $w$
precedes the first occurrence of $j$ in $w$.  Clearly,
every word has the property that, by renaming the letters in
a 1--1 fashion, one can obtain an orderly word.

\begin{theorem}
Let $|w| = n \geq 9$, and suppose $w$ avoids $3$-antipowers and
contains exactly three distinct letters.   Then, up to renaming of
the letters, we have the following:
\begin{enumerate}[(a)]
\item If $|w| \equiv \modd{0} {3}$, then there is no such $w$;

\item If $|w| \equiv \modd{1} {3}$, then $w = 0 1^{n-2} 2$;

\item If $|w| \equiv \modd{2} {3}$, then $w$ is one of the
following:  $001^{n-3} 2$, $010^{n-3}2$,
$0 1^{n-2} 2$, $0 1^{n-3} 21$, or $01^{n-3}22$.
\end{enumerate}
\label{thm10}
\end{theorem}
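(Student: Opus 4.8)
The plan is to combine the trivial block-length-one constraint with the binary characterization of Corollary~\ref{crl:three-antipowers}, applied to the three binary images of $w$, and then to pin down the placement of the few rare letters by an explicit antipower construction. Throughout, denote the three distinct letters by $a,b,c$ with $n_a\ge n_b\ge n_c\ge 1$ occurrences, and let $d=a$ be a most frequent letter. The first observation is that three pairwise distinct length-one blocks form a $3$-antipower, so $w$ contains no three consecutive pairwise distinct letters; I will use this repeatedly. The second observation is that if two of the three letters are identified, distinct binary blocks lift to distinct ternary blocks, so each of the three binary images of $w$ (identifying $\{a,b\}$, $\{a,c\}$, or $\{b,c\}$) again avoids $3$-antipowers and is therefore described by Corollary~\ref{crl:three-antipowers}, using the length $9$--$11$ entries of Remark~\ref{remark9} when $n<12$. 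What I extract from that corollary is that a binary $3$-antipower-free word of length $\ge 9$ is, up to complementation and reversal, either \emph{sparse} (its minority letter occurs at most three times) or \emph{alternating} (it lies in $\mathcal{C}_n\bigl((01)^*\cup(01)^*0\bigr)$, so its two letters are nearly balanced).

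Next I would bound the rare letters. In the image identifying $\{a,b\}$, the minority letter is $c$, and since $3n_c\le n$ gives $n_c\le n/3<\lfloor n/2\rfloor$, the balanced alternating case cannot occur; hence $n_c\le 3$. In the image identifying $\{a,c\}$, the minority letter is $b$; the alternating case here would force $n_a=n_b$ and $n_c=1$, so that $b$ strictly alternates with letters from $\{a,c\}$. As both $a$ and $c$ still occur, two consecutive $\{a,c\}$-positions must then carry different letters, producing a triple of the form $a\,b\,c$, a forbidden length-one antipower. This contradiction excludes the alternating case, giving $n_b\le 3$ as well. Consequently $d$ occurs at least $n-6$ times, and $w$ is a long reservoir of $d$'s carrying at most six occurrences of the two rare letters $b,c$ (and, as will emerge, at most two of each survive).

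The delicate step is to locate these rare letters. Using blocks of length one and two, I would first show that equal rare letters group into short runs, while two runs of \emph{different} letters cannot be separated by only a bounded gap of $d$'s: any interior configuration $\dots\,b\,d^{g}\,c\,\dots$ yields a short-block antipower such as $(\cdot\,b)(d\cdots d)(c\,\cdot)$. This confines one rare letter to a short prefix and the other to a short suffix. The global device is then the following: choose a first block of length $m$, starting at position $s+1$, that covers the left cluster, a pure-$d$ middle block, and a third block that meets the right cluster. Because the two end clusters hold different letters separated by a pure-$d$ middle, the three blocks are automatically pairwise distinct, so this is a genuine $3$-antipower whenever $m$ and $s$ can be chosen, with $s$ in a small range determined by the left padding and cluster size, so that the first block still covers the left cluster and the third block still meets the right cluster. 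This amounts to a short interval-plus-divisibility condition in $n$; larger offsets or cluster sizes let $s$ range over a full residue system modulo $3$ and hence always produce an antipower, which forces the surviving offsets and cluster sizes to be at most $1$ and $2$.

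Finally I would read off the survivors via this divisibility boundary. The symmetric form $0\,1^{n-2}\,2$ admits an antipower only when $3\mid n$, so it survives for $n\equiv 1$ and $n\equiv 2$; each form with a unit of extra left or right flexibility (such as $0\,0\,1^{n-3}\,2$, $0\,1\,0^{n-3}\,2$, $0\,1^{n-3}\,2\,1$, $0\,1^{n-3}\,2\,2$) additionally dies as soon as $3\mid(n-1)$. Hence for $n\equiv 0$ no word survives, giving (a); for $n\equiv 1$ only $0\,1^{n-2}\,2$ survives, giving (b); and for $n\equiv 2$ (where $3\nmid n$ and $3\nmid(n-1)$) exactly the five listed words survive, giving (c). To complete the argument I would verify, by the same block analysis together with $n\ge 9$ (which rules out fitting both rare clusters into two adjacent blocks), that each listed word is indeed $3$-antipower-free, and then rename letters into orderly form to obtain the stated representatives. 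I expect the main obstacle to be precisely this last placement analysis: organizing the finitely many prefix/suffix patterns and their offset parameters so that the divisibility bookkeeping reproduces \emph{exactly} the listed forms in each residue class, neither over- nor under-counting.
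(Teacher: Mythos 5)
There is a genuine gap, and it sits exactly where you lean on Remark~\ref{remark9}. Your whole reduction rests on the dichotomy that every binary $3$-antipower-free word of length $\ge 9$ is, up to complementation and reversal, either \emph{sparse} (minority letter occurring at most three times) or \emph{alternating}. That is indeed what Corollary~\ref{crl:three-antipowers} gives for $n \ge 12$, but it is \emph{false} for $n \in \{9,10,11\}$, and the counterexamples are in the very tables you cite: $101000101$ (length $9$), $0101000101$ (length $10$), and $01010001010$, $10001010001$, $00101000101$ (length $11$) each have exactly four occurrences of their minority letter and are not alternating. This breaks your step ``$n_b \le 3$'' precisely in the range $9 \le n \le 11$: for instance, at $n = 9$ a hypothetical word with $(n_a,n_b,n_c) = (4,4,1)$ whose $\{a,c\}$-image is $101000101$ is never examined by your case analysis, yet disposing of exactly such words is what part (a) demands at $n = 9$. (The paper avoids this trap by verifying everything of length less than $18$ by exhaustive enumeration and only arguing for $n \ge 18$; your proof needs either that computation or a separate argument for these exceptional images.)

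Even in the range where your dichotomy is valid, the proof is not complete: the entire content of the theorem lies in the ``placement analysis'' that you only sketch and yourself flag as the main obstacle --- no precise lemma is stated, and the interval-plus-divisibility bookkeeping that is supposed to output exactly the listed words is not carried out. Part of this difficulty is self-inflicted by your choice of codings: merging the majority letter $a$ with a rare letter bounds $n_b$ and $n_c$ only separately, leaving up to six rare positions to place. The paper instead merges the two rare letters, taking $\tau(a)=0$, $\tau(b)=\tau(c)=1$; after excluding the alternating case (via a second coding, much as you do), Corollary~\ref{crl:three-antipowers} gives $n_b+n_c \in \{2,3\}$ in one stroke, so only $(n_b,n_c) \in \{(1,1),(2,1)\}$ survive. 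Moreover, the mod-$3$ conditions never need to be re-derived from explicit antipower constructions: they are already recorded in Corollary~\ref{crl:three-antipowers}, since $10^*1$ occurs in $A_n$ only for $n \not\equiv \modd{0} {3}$, and $10^*10$, $10^*11$ only for $n \equiv \modd{2} {3}$. All that remains is a short argument with length-$1$ and length-$2$ blocks to decide which ternary preimages of $10^*1$, $10^*10$, $10^*11$ and their reversals avoid $3$-antipowers. I recommend restructuring your argument along those lines.
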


\begin{proof}
  We can verify the claim for $|w| < 18$ by exhaustive enumeration and consider only the case $n := |w| \ge 18$.
  Let $a, b, c$ denote the three letters occurring in $w$, in descending order of number of occurrences in $w$.
  Define a coding $\tau$ by $\tau(a) = 0$ and $\tau(b) = \tau(c) = 1$. Note that $\tau(w)$ also avoids $3$-antipowers, and contains at least six zeros and at least two ones.

  If $\tau(w)$ contains more than three occurrences of $1$, then $\tau(w) \in \mathcal{C}_{n}((01)^* \cup (01)^*0)$ by Corollary~\ref{crl:three-antipowers}. But then applying one of the two possible codings $\tau'$ with $\tau'(a) = 0$ and $\{\tau'(b), \tau'(c)\} = \{0, 1\}$ to $w$ yields a word $\tau'(w)$ that is not listed in Corollary~\ref{crl:three-antipowers}. This contradicts the fact that $w$ avoids $3$-antipowers. Therefore ${|\tau(w)|}_1 \in \{2, 3\}$.
  
  If  ${|\tau(w)|}_1 = 3$, then ${|w|}_b = 2$ and ${|w|}_c = 1$.
  Corollary~\ref{crl:three-antipowers} yields $n \equiv \modd{2} {3}$ and $\tau(w) \in 10^*11 \cup 110^*1$. By symmetry, we may assume that $\tau(w) \in 10^*11$. If the last two letters of $w$ were different, the suffix of length $3$ of $\tau(w)$ would be a $3$-antipower. Therefore, up to renaming of letters and reversal, we have $w = 01^{n-3}22$.
  
  If  ${|\tau(w)|}_1 = 2$, then ${|w|}_b = {|w|}_c = 1$ and Corollary~\ref{crl:three-antipowers} yields $\tau(w) \in \mathcal{C}_{n}(0^*011 \cup 0^*101 \cup 10^*1 \cup 10^*10)$. Since the preimages of both occurrences of $1$ must be distinct, the relation $\tau(w) \in 0^*011 \, \cup \, 0^*101$ yields $3$-antipowers in the suffix of $w$ of length $3$ or length $6$, respectively. Thus either
  \begin{enumerate}
      \item $w = 01^{n-2}2$ up to renaming of letters and reversal and $(n \bmod 3) \in \{1, 2\}$, or
      \item $w = 01^{n-2}21$ up to renaming of letters and reversal and $n \equiv \modd{2} {3}$.\qedhere
  \end{enumerate}
\end{proof}

\begin{corollary}
Every $3$-antipower-free infinite word contains at most two distinct letters.
\end{corollary}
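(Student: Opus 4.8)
The plan is to argue by contradiction and collapse the general situation to the three-letter case already settled by Theorem~\ref{thm10}. Suppose $w$ is an infinite word avoiding $3$-antipowers that contains at least three distinct letters; I will derive a contradiction. The first ingredient is the closure property already used (for finite words) inside the proof of Theorem~\ref{thm10}: if $\sigma$ is any \emph{coding} (a length-preserving, letter-to-letter map) and $w$ avoids $3$-antipowers, then $\sigma(w)$ also avoids $3$-antipowers. The point is that a coding can only identify letters, so any $3$-antipower occurring in $\sigma(w)$ lifts to one in $w$: the three equal-length blocks occupying the same positions in $w$ have pairwise distinct images, hence are themselves pairwise distinct. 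Thus passing to a coding can never create a $3$-antipower.

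The second step is to use this to force $w$ down to exactly three letters. I would pick three letters $a, b, c$ occurring in $w$ and let $\sigma$ send $a, b, c$ to three distinct symbols $0, 1, 2$ and every remaining letter (if any) to $0$. Then $\sigma(w)$ is an infinite word over $\{0, 1, 2\}$ in which all three symbols occur (since $a, b, c$ all occur in $w$), and by the closure property $\sigma(w)$ still avoids $3$-antipowers. Hence $\sigma(w)$ is an infinite, $3$-antipower-free word containing exactly three distinct letters.

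The third step is to extract a finite prefix to which Theorem~\ref{thm10} applies. Since $\sigma(w)$ is infinite and each of $0, 1, 2$ occurs, all sufficiently long prefixes contain exactly three distinct letters. Choosing a length $n$ that is divisible by $3$, at least $9$, and large enough that the prefix already contains all three symbols, the corresponding prefix $u$ of $\sigma(w)$ is a factor of a $3$-antipower-free word and so avoids $3$-antipowers, it contains exactly three distinct letters, and $n \equiv 0 \pmod{3}$. But Theorem~\ref{thm10}(a) asserts that no such word exists. This contradiction shows that $w$ can contain at most two distinct letters.

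I do not expect a serious obstacle here. The only two points requiring care are the direction of the coding argument, namely that a coding destroys antipowers rather than creating them, and the observation that case (a) of Theorem~\ref{thm10} rules out three-letter words of \emph{every} length divisible by $3$; this is exactly what makes the infinitely many prefixes of $\sigma(w)$ of length $\equiv 0 \pmod{3}$ fatal, so no delicate choice of prefix length is needed.
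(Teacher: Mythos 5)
Your proof is correct and follows exactly the route the paper intends: the corollary is stated without proof as an immediate consequence of Theorem~\ref{thm10}, and your argument (a coding collapses extra letters without creating $3$-antipowers, then any sufficiently long prefix of length divisible by $3$ with three distinct letters contradicts part~(a)) is the natural way to make that deduction explicit, using the same coding trick the paper itself employs in the proofs surrounding this corollary.
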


\begin{theorem}
Every word containing at least four distinct letters has a $3$-antipower.
\end{theorem}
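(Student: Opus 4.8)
The plan is to prove the contrapositive: every word $w$ that avoids $3$-antipowers uses at most three distinct letters. The key enabling observation is that codings (letter-to-letter maps) can never \emph{create} a $3$-antipower. Indeed, if $\phi$ is any coding and $\phi(w)$ contains a $3$-antipower $\phi(v_1)\phi(v_2)\phi(v_3)$ with the three blocks pairwise distinct, then the preimage blocks $v_1,v_2,v_3$ are already pairwise distinct and of equal length, so $v_1v_2v_3$ is a $3$-antipower in $w$. Hence if $w$ avoids $3$-antipowers, so does $\phi(w)$ for every coding $\phi$. This lets me transport the rigid classification of Theorem~\ref{thm10} back to $w$.

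So I would suppose toward a contradiction that $w$ avoids $3$-antipowers, has length $n \ge 9$, and contains at least four distinct letters. First I would record the structural consequence of Theorem~\ref{thm10}: every word with exactly three distinct letters listed there has all of its positions $3, 4, \dots, n-2$ occupied by a single letter, since each of the forms in parts (b) and (c) has a constant ``interior,'' the non-dominant letters appearing only among the first two and last two positions. I call this the \emph{constant-interior} property.

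Next I would show that the interior of $w$ itself is constant. If two distinct letters $u \ne v$ both occurred at interior positions of $w$, I would apply the coding $\phi$ sending $u \mapsto 0$, $v \mapsto 1$, and every other letter to $2$. Since $w$ has at least three letters, $\phi(w)$ has exactly three distinct letters and, by the observation above, avoids $3$-antipowers; by Theorem~\ref{thm10} it has the constant-interior property. But $\phi$ sends $u$ and $v$ to different letters, so $\phi(w)$ would take two different values on its interior, a contradiction. Hence no two distinct letters of $w$ share the interior, i.e., $w = w_1 w_2\, x^{n-4}\, w_{n-1} w_n$ for a single letter $x$.

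Finally I would finish with a pigeonhole count on the four boundary positions. Because $w$ avoids $3$-antipowers of block length $1$, the triples $w_1 w_2 x$ and $x\, w_{n-1} w_n$ are each \emph{not} pairwise distinct; consequently each of the pairs $\{w_1, w_2\}$ and $\{w_{n-1}, w_n\}$ can contribute at most one letter different from $x$. Thus $w$ uses at most $1 + 2 = 3$ distinct letters, contradicting the assumption of four; the cases $4 \le n \le 8$ are disposed of by direct enumeration, as for the other small-length cases in the paper. I expect the main obstacle to be the bookkeeping in the middle step, namely pinning down the constant-interior property from Theorem~\ref{thm10} and correctly locating the few positions at which the minority letters may appear; once the word is forced into the shape $w_1 w_2 x^{n-4} w_{n-1} w_n$, the concluding count is immediate.
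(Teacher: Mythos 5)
Your proposal is correct, and while it rests on the same two pillars as the paper's proof --- the observation that a coding cannot create a $3$-antipower, and the classification of Theorem~\ref{thm10} --- it exploits the classification in a genuinely different way. The paper takes a \emph{shortest} counterexample $w$ with four letters, writes $w = w'a$, uses minimality (plus orderliness) to conclude that the prefix $w'$ is a three-letter word covered by Theorem~\ref{thm10}, and then checks form by form that appending the fresh letter $a$ creates a $3$-antipower: in every case the length-$3$ suffix consists of three distinct letters, except for $w' = 01^{n-4}22$, where the whole word $w$ is itself a $3$-antipower. You never invoke minimality; instead you apply Theorem~\ref{thm10} to collapsed images $\phi(w)$ of the full word, extract only the weak ``constant interior'' consequence of the classification, use the coding $u \mapsto 0$, $v \mapsto 1$, rest $\mapsto 2$ to force the interior of $w$ itself to be constant, and finish with a pigeonhole count on the four boundary positions via block-length-$1$ antipowers. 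Your route buys robustness: no minimal-counterexample setup, no orderliness bookkeeping, and no dependence on the exact five forms in the list (only on their shared shape), at the cost of one extra structural lemma; the paper's route is shorter once the explicit list is in hand, since minimality pins the new letter to the last position and the final check is immediate. Two cosmetic points: in your coding step the hypothesis you actually need is that some letter outside $\{u,v\}$ occurs (guaranteed since $w$ has at least four, not merely three, distinct letters), and when $n \equiv \modd{0} {3}$ the contradiction is already that $\phi(w)$ exists at all, Theorem~\ref{thm10}(a) being vacuous there --- your argument still goes through, but it is cleaner to say so. Both proofs defer lengths $n \le 8$ to exhaustive search, which in either case needs the coding observation to collapse to a bounded alphabet, so you are on equal footing with the paper there.
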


\begin{proof}
Suppose there is a word with four distinct letters, say
$\{ 0,1,2,3 \}$, but having no
$3$-antipower. Choose a shortest such word, and call it
$w$.     Without loss of generality, we may assume that $w$ is orderly
and contains
exactly four distinct letters, for if $w$ contains more than four
distinct letters, say numbers larger than $3$,
then we could apply a coding $\tau$ to $w$ that maps
letters $\leq 3$ to themselves, but all larger letters to $3$.
If $w$ has no $3$-antipower, then neither does $\tau(w)$.  

By exhaustive enumeration we can check that $|w| \geq 9$.
Write $w = w'a$ for $a$ a single letter.  Then from the minimality
of $w$, it must be that $w \in \{0,1,2\}^*$ and $a = 3$.
We can now appeal to Theorem~\ref{thm10}.  In each case
except one, the resulting word ends in a $3$-antipower of length $3$.
In the one remaining case we have $w' = 01^{n-4}22$ for $n \equiv 
\modd{2} {3}$.   Then $w'a$ is itself a $3$-antipower.
\end{proof}

\begin{corollary}
Let
$A_3(k,n)$ be the number of length-$n$
words avoiding $3$-antipowers over an alphabet of size $k$.
Then for $n \geq 12$ we have
$$ A_3(k,n) =
\begin{cases}
k( (k-1)n + 5k-4), & \text{if $n \equiv \modd{0} {3}$}; \\
k( (k-1)n + k^2 + 3k - 3), & \text{if $n \equiv \modd{1} {3}$};\\
k( (k-1)n + 5k^2-5k+1), & \text{if $n \equiv
\modd{2} {3}$}.
\end{cases}
$$
\end{corollary}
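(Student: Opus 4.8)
The plan is to partition the length-$n$ words over a $k$-letter alphabet that avoid $3$-antipowers according to the number of distinct letters they actually use, and to count each part separately. By the theorem that every word containing at least four distinct letters has a $3$-antipower, every word counted by $A_3(k,n)$ uses one, two, or three distinct letters, so $A_3(k,n) = E_1 + E_2 + E_3$, where $E_j$ denotes the number of such words using exactly $j$ distinct letters.

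The first two terms reduce to the binary case. Clearly $E_1 = k$, the $k$ constant words $a^n$. For $E_2$ I would fix a two-element subset $\{a,b\}$ of the alphabet; relabelling $0 \mapsto a$, $1 \mapsto b$ is a bijection putting the length-$n$ words over $\{a,b\}$ avoiding $3$-antipowers in correspondence with the binary words counted by Corollary~\ref{crl:three-antipowers} (in the form extended to $n \ge 12$ noted in Remark~\ref{remark9}). Exactly two of these use a single letter, so the number using both is $|A_n| - 2$, and summing over the $\binom{k}{2}$ subsets gives $E_2 = \binom{k}{2}(|A_n| - 2)$, where $|A_n|$ equals $2n+12$, $2n+14$, or $2n+22$ according as $n \bmod 3$ is $0$, $1$, or $2$.

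The interesting term is $E_3$, extracted from Theorem~\ref{thm10}, which lists, up to renaming, all $3$-antipower-free words with exactly three distinct letters: none when $n \equiv 0$, the single form $01^{n-2}2$ when $n \equiv 1$, and five forms when $n \equiv 2 \pmod 3$. To turn each canonical form $w_0$ into an exact count over the $k$-letter alphabet, observe that the words of that shape are precisely the images $\phi(w_0)$ under injections $\phi$ of the three letters of $w_0$ into the alphabet; since $w_0$ uses all three letters, reading the word recovers $\phi$, so distinct injections give distinct words and each form contributes exactly $k(k-1)(k-2)$ words. Hence $E_3 = c_n\, k(k-1)(k-2)$ with $c_n \in \{0,1,5\}$ according to $n \bmod 3$. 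Substituting the three expressions and simplifying is then routine algebra and yields the three stated formulas.

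The step demanding the most care is converting the ``up to renaming'' classification into an exact count without over- or under-counting. Two facts must be verified. First, each canonical form has trivial renaming stabilizer: a relabelling fixing a word must fix every letter occurring in it, so since each form uses all three of its letters, its orbit under injections into the alphabet has full size $k(k-1)(k-2)$. Second, the listed forms lie in pairwise distinct renaming orbits, so their contributions are disjoint; equivalently, no two of them induce the same partition of $\{1,\dots,n\}$ into letter classes. For the five forms arising when $n \equiv 2$, a convenient consistency check is that reversal — which preserves $3$-antipower-freeness — permutes the list, pairing $001^{n-3}2$ with $01^{n-3}22$ and $010^{n-3}2$ with $01^{n-3}21$ while fixing $01^{n-2}2$ up to renaming; this confirms both that the five forms are complete and that a word and its (distinct) reverse are correctly counted as the two separate words they are.
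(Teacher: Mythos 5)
Your proposal is correct and takes essentially the same approach as the paper's proof: partition the words by the number of distinct letters used, count the one- and two-letter words via Corollary~\ref{crl:three-antipowers} together with Remark~\ref{remark9}, count the three-letter words via Theorem~\ref{thm10}, and sum. The only difference is one of detail: the paper carries out the algebra only for $n \equiv \modd{0} {3}$ (where the three-letter contribution vanishes) and declares the other cases similar, whereas you make explicit the $k(k-1)(k-2)$ orbit count per canonical three-letter form, which is precisely what the omitted cases require.
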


\begin{proof}
  Suppose $w$ is a word avoiding $3$-antipowers of length at least $12$.
    Then $w$ either has one, two, or three distinct letters.
     We prove the claim for the case 
$n \equiv \modd{0} {3}$, and omit the details for the other cases. In the case of one letter, there are clearly $k$
such words.   In the case of two letters,
Corollary~\ref{crl:three-antipowers} and Remark~\ref{remark9} enumerate all the possibilities; there are 5 families of binary words, and we
can replace the $0$ and $1$ in these by any pair of 
distinct letters to get a word that avoids $3$-antipowers.  There are no examples in the case of three letters.    This gives a total of
$(2n+12-2)k(k-1)/2 + k = k( (k-1)n + 5k-4)$ possibilities.

The other cases can be handled similarly.
\end{proof}

\section{Avoiding $4$-antipowers and beyond}

We now investigate $k$-antipowers for $k \ge 4$ and show that they behave different from $3$-antipowers in many ways.
Our first observation is regarding the growth of the set of words containing $k$-antipowers.

\begin{remark}
Corollary~\ref{crl:three-antipowers} shows that there are only linearly many binary words of length $n$ avoiding $3$-antipowers.  In contrast, the construction given in \cite[Prop.~12]{FiciRSZ16} can be adapted to show that $u(n)$, the number of length-$n$ binary words avoiding $4$-antipowers, grows faster than any polynomial in $n$.  This can be proved as follows:  let ${\bf a} = (a_i)_{i \geq 1}$ be any  infinite  sequence of integers
satisfying
$a_1 = 1$ and $a_{i+1} \geq 4 a_i$.
For $j \geq 1$ define
the characteristic
sequence
$$ c_{\bf a}(j) = \begin{cases}
1, & \text{if $j = a_i$  for some $i$ }; \\
0, & \text{otherwise},
\end{cases}
$$
and ${\bf c}_{\bf a} = c_{\bf a} (1) c_{\bf a} (2) \cdots $.
Then it is not hard to show, using the ideas in \cite[Prop.~12]{FiciRSZ16}, that every length-$n$
prefix of every ${\bf c}_{\bf a}$ avoids $4$-antipowers.   Let $t(n)$ be the number of length-$n$ prefixes of words of this form; then $u(n) \geq t(n)$.  It is not hard to see that $t(n) = t(n-1) + t(\lfloor n/4 \rfloor)$ for $n \geq 4$.  Then, according to
\cite{deBruijn:1948}, we have $t(n) = n^{\Theta(\log n)}$.   

The sequence $t(n)$ is sequence
\seqnum{A330513} in the On-Line Encyclopedia of Integer Sequences~\cite{Sloane}, and the sequence
$u(n)$ is sequence \seqnum{A275061}.
The exact growth rate of $u(n)$ is apparently still not known.
\end{remark}

We previously established that for every $k \le 3$, there is a maximum alphabet size for which $k$-antipowers can be avoided. This no longer holds for $k \ge 4$.
Note that since every $k$-antipower contains a $(k-1)$-antipower, it suffices to prove this claim for $k = 4$.

\begin{theorem}
  For every $m \ge 1$, there exists a word that contains $m$ distinct letters and avoids $4$-antipowers.
\end{theorem}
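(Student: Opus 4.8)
The plan is to adapt the sparse-characteristic-sequence construction of the preceding remark, replacing the single symbol $1$ by the $m-1$ new letters $1, 2, \ldots, m-1$. For $m = 1$ the one-letter word $0$ works trivially, so assume $m \ge 2$. I would fix the gap sequence $a_i = 4^i$ (so that $a_{i+1} = 4a_i$, exactly as in the remark) and define an infinite word $\mathbf{u}$ over $\{0,1,\ldots,m-1\}$ by placing the letter $i$ at position $a_i$ for each $1 \le i \le m-1$ and the letter $0$ at every other position. Let $w$ be the prefix of $\mathbf{u}$ of length $a_{m-1} = 4^{m-1}$; since positions $1,2,3$ carry $0$ and the special positions carry $1,\ldots,m-1$, the word $w$ contains exactly $m$ distinct letters. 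The only thing left to verify is that $w$ avoids $4$-antipowers.

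The key point is that the reasoning showing $\mathbf{c}_{\bf a}$ avoids $4$-antipowers never uses the \emph{identity} of the non-zero letters, only the knowledge of which blocks are entirely zero. Accordingly, I would isolate the following claim: in any factor of $w$ of length $4\ell$, partitioned into four consecutive blocks $B_1 B_2 B_3 B_4$ each of length $\ell$, at most two of the blocks contain a non-zero letter. Granting the claim, at least two of the $B_i$ equal $0^\ell$ and hence coincide, so $B_1 B_2 B_3 B_4$ is not a $4$-antipower. As every candidate $4$-antipower in $w$ is a factor of some length $4\ell$ partitioned in precisely this way, it follows that $w$ avoids $4$-antipowers.

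To prove the claim, consider a factor occupying positions $[s+1,\, s+4\ell]$ and let $a_j$ be the largest special position satisfying $a_j \le s + 4\ell$. If $a_j \le s$, the factor is $0^{4\ell}$ and we are done, so assume $a_j$ lies in the factor. The previous special position then satisfies $a_{j-1} \le a_j/4 \le (s+4\ell)/4 = s/4 + \ell \le s + \ell$, which is exactly the right endpoint of $B_1$. Hence every non-zero position of the factor other than $a_j$ lies in $B_1$ (or before the factor, which is irrelevant), so the non-zero positions meet at most the two blocks $B_1$ and the block containing $a_j$. This is the claim, and it covers all $\ell \ge 1$, including $\ell = 1$, since consecutive special positions differ by at least $3$.

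The one step deserving care — and the reason the construction is tuned to $4$-antipowers — is the inequality $a_{j-1} \le s + \ell$: it is precisely the factor-$4$ gap growth $a_{i+1} \ge 4 a_i$, matched against the four blocks of a window, that forces every special position but the last into the first block. Once this confinement is established the remainder is bookkeeping, and, crucially, the particular letters assigned to the special positions play no role whatsoever in the argument. That is exactly why the same single word $w$ serves for every value of $m$, giving the required family of $4$-antipower-free words with arbitrarily many distinct letters.
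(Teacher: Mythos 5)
Your proof is correct. Interestingly, the word you construct is \emph{literally the same word} as the paper's: the paper builds it recursively via $w_0 = a_0$, $w_{i+1} = w_i\, a_0^{3|w_i|-1} a_{i+1}$, which places the letter $a_j$ exactly at position $4^j$ and $a_0$ everywhere else, i.e., your characteristic-sequence word with special positions $4^i$. Where the two arguments part ways is in the verification. The paper proceeds by induction on the number of letters: factors not containing the newest letter $a_{i+1}$ are handled by the induction hypothesis, and a short case analysis on which of the four blocks contains $a_{i+1}$ shows two of the remaining blocks must equal $a_0^\ell$ (this case analysis quietly uses that the factor starts at a position $\ge 1$, which is what keeps the second block inside the long $a_0$-gap). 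Your proof instead establishes a single global confinement claim -- in any window of four $\ell$-blocks, every special position except the largest lies in the first block, because $a_{j-1} \le a_j/4 \le s/4 + \ell \le s+\ell$ -- which handles all letters at once with no induction. Your route is somewhat more quantitative but buys a cleaner and stronger structural statement (at most two blocks ever meet a non-zero letter), makes explicit why the factor-$4$ gap growth is exactly what four blocks require, and connects the theorem directly to the characteristic-sequence construction of the paper's preceding remark; the paper's induction buys freedom from position arithmetic at the cost of a case analysis whose justification is left partly implicit.
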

\begin{proof}
  Define an (infinite) alphabet $\Sigma = \{a_0, a_1, \dots\}$ and a sequence of words $w_0, w_1, \dots$ by $w_0 := a_0$ and $w_{i+1} := w_i {a_0}^{3{|w_i|}-1} a_{i+1}$ for $i \ge 0$. For each $i \ge 0$, the word $w_i$ contains exactly $i+1$ distinct letters.
  
  We show by induction that for each $i \ge 0$, the infinite word $w_i a_0^{\omega}$ (and thus also $w_i$ itself) does not contain a $4$-antipower.   This claim clearly holds for $i = 0$. We now assume that the claim holds for an arbitrary $i$ and show that the claim holds for $i+1$ too.
  Let $uvxy$ be a factor of $w_{i+1} a_0^{\omega}$ with $\ell := |u| = |v| = |x| = |y|$. We show that $uvxy$ cannot be a $4$-antipower.
  If $uvxy$ does not contain the letter $a_{i+1}$, then this immediately follows from the induction hypothesis.
  If $a_{i+1}$ is contained in $u$ or in $v$, then $x = y = a_0^\ell$.
  If $a_{i+1}$ is contained in $x$, then $v = y = a_0^\ell$.
  If $a_{i+1}$ is contained in $y$, then $v = x = a_0^\ell$.
  In any case, at least two of the factors $u, v, x, y$ coincide, so $w_{i+1} a_0^{\omega}$ avoids $4$-antipowers.
\end{proof}

\section{Binary words avoiding $5$-antipowers}

Better lower bounds can be obtained for some $r$.
One such example is given below.

\begin{theorem}
Let $k \geq 14$.
Then $N(k,5) \geq 6k+4$.
If, additionally, $k \equiv \modd{5} {10}$, then $N(k,5) \geq 6k+5$.
\end{theorem}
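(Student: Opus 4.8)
The plan is to establish each lower bound by an explicit construction: to prove $N(k,5) \geq 6k+4$ it suffices to exhibit a single binary word of length $6k+3$ that avoids both $k$-powers and $5$-antipowers, and to prove $N(k,5) \geq 6k+5$ under the hypothesis $k \equiv \modd{5} {10}$ it suffices to exhibit one of length $6k+4$ with the same two avoidance properties. The natural base object is the word $x_{k,5} = ((01)^{k-1}00)^2(01)^{k-1}0$ of length $6k-1$ from Theorem~\ref{thm:lower}; the new word will be built from the same ingredients --- maximal runs of $0$'s of length at most three, separated by alternating blocks from $(01)^*$ --- but padded with a few extra letters near the ends so as to gain the additional length $4$ (resp.\ $5$) while preserving avoidance. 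The first step is to fix the exact shape of this padding.

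The $k$-power avoidance is the routine part and follows the template of Theorem~\ref{thm:lower}. Since the construction contains no factor $11$ and only bounded runs of $0$'s, every sufficiently long factor contains $00$; examining the mutual distances of the occurrences of $00$ forces any period $y$ of a purported $k$-power $y^k$ to satisfy $|y| \geq 2k$, whence $|y^k| \geq 2k^2 > 6k+4 \geq |w|$ for $k \geq 14$, a contradiction. I would carry out this computation first, as it constrains how much $0$-padding is admissible.

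The $5$-antipower avoidance is the main obstacle. As in Theorem~\ref{thm:lower}, call a block \emph{distinguished} if it contains the factor $00$; every non-distinguished block lies in $\{\varepsilon,1\}(01)^*\{\varepsilon,0\}$, which contains only two words of each length. A window of five equal-length blocks can therefore be a $5$-antipower only if at least three of its blocks are distinguished and pairwise distinct, so the whole task reduces to showing that no such window exists. The difficulty, and the reason one cannot simply pad $x_{k,5}$ symmetrically with runs of $0$'s, is that every extra run of $0$'s creates extra occurrences of $00$: for block lengths $d$ close to the critical value $(6k+3)/5$ a single window of length $5d$ can straddle several runs at once, producing three or four distinguished blocks. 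The construction must therefore place its runs of $0$'s so that, for \emph{every} block length $d$ and \emph{every} offset, each window either meets at most two runs, or meets runs sitting at equal positions relative to the period $2k$, so that the corresponding distinguished blocks coincide. Verifying this requires a case analysis on $d$ (small $d$, where windows are too short to reach two runs; $d$ near the period $2k$, where periodicity forces coincidences; and $d$ near $(6k+3)/5$, where the length bound limits the number of blocks), and it is this analysis that pins down the admissible padding length.

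Finally, the improvement to $6k+5$ for $k \equiv \modd{5} {10}$ comes from a divisibility coincidence in exactly one family of windows. When $5 \mid k$ the critical block length $6k/5$ is an integer, and when $k$ is in addition odd the two runs of $0$'s straddled by the corresponding aligned window fall at equal offsets modulo the period $2k$, so the fifth potentially-distinct block is forced to coincide with an earlier one. This slack permits one further letter to be appended while still defeating every antipower. I would isolate and check this single congruence case at the very end, after the general construction for $6k+4$ is in place. The hard part throughout is the $5$-antipower count: unlike the clean argument of Theorem~\ref{thm:lower}, gaining the additive constant forces one to exploit the exact periodic placement of the occurrences of $00$, not merely their number.
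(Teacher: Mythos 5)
Your high-level strategy coincides with the paper's: exhibit, for each $k \ge 14$, an explicit binary word of length $6k+3$ (length $6k+4$ when $k \equiv 5 \pmod{10}$) avoiding both $k$-powers and $5$-antipowers; deduce $k$-power avoidance from the $00$-spacing argument of Theorem~\ref{thm:lower}; and deduce $5$-antipower avoidance from the fact that blocks not containing $00$ are ``alternating,'' that there are only two alternating words of each length, and hence that a $5$-antipower would require at least three pairwise distinct blocks containing $00$. The paper's proof runs exactly along these lines, refined by a parity observation (at most one of $u_1,u_3,u_5$ and at most one of $u_2,u_4$ can be alternating), a length argument forcing at least two alternating blocks, and a six-case analysis of which positions carry the alternating blocks.

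The genuine gap is that you never produce the words. A lower bound of this kind \emph{is} its construction; deferring ``the exact shape of this padding'' to a case analysis that you also do not carry out leaves nothing that can be checked. Moreover, the constraints you impose on the intended construction --- maximal $0$-runs of length at most three, no factor $11$, padding confined to the two ends of $x_{k,5}$ --- are inconsistent with the witnesses that are known to work. The paper requires five different families depending on $k \bmod 10$: for $k \equiv 1,3,7,9 \pmod{10}$ the witness is $0^8(01)^{k-3}00(01)^{k-1}00(01)^{k-1}0$, which begins with a run of \emph{eight} zeros and alters the interior of $x_{k,5}$, and for $k \equiv 2,4,8 \pmod{10}$ and $k \equiv 6 \pmod{10}$ the witnesses contain the factor $11$, e.g., $(01)^{(k+2)/2}(10)^{k-1}11(10)^{k-1}1(10)^{(k+2)/2}$. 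Your plan gives no reason to believe that any word satisfying your structural constraints reaches length $6k+3$ for those residues, and the residue-dependence is not incidental: the interaction between the candidate block length, the period $2k$, and the positions of the $00$-markers changes with $k \bmod 10$, which is precisely why the paper splits into cases. Finally, your explanation of the extra $+1$ for $k \equiv 5 \pmod{10}$ (a coincidence at block length $6k/5$ for $k$ odd and divisible by $5$) is unverified speculation; the paper's witness there is $0000(01)^{k-1}00(01)^{k-1}00(01)^{k-1}00$, of length $6k+4$, whose correctness is established by the same kind of block-position analysis rather than by the single congruence case you describe.
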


\begin{proof}
The following table gives, for each $k \ge 14$, a word $w_k$ that contains no $k$ power and no $5$-antipower.
The construction depends on the equivalence class of $\modd{k} {10}$.
Therefore, the table contains separate rows for each $i \in \{0, \dots, 9\}$ where $k \equiv \modd{i} {10}$.

\begin{center}
\begin{tabular}{|c|c|}
$i$ & $w_k$ \\
\hline
0 & $00 (01)^{k-1} 00 (01)^{k-1} 00 (01)^{k-1} 000$ \\
1, 3, 7, 9 & $0^8 (01)^{k-3} 00 (01)^{k-1} 00 (01)^{k-1} 0 $ \\
2, 4, 8 & $(01)^{(k+2)/2} (10)^{k-1} 11 (10)^{k-1} 1 (10)^{(k+2)/2} $ \\
5 & $0000 (01)^{k-1} 00 (01)^{k-1} 00 (01)^{k-1} 00 $ \\
6 & $000 (01)^{k-1} 11 (01)^{k-1} 11 (01)^{k-1} 10$ 
\end{tabular}
\end{center}

To see that none of these words contains a $k$-power, we can use exactly the same argument as in the proof of Theorem~\ref{thm:lower}.

To prove avoidance of $5$-antipowers, we also resort to ideas from the proof of Theorem~\ref{thm:lower} but the arguments are slightly more technical.
Let us first investigate the case $k \equiv \modd{0} {10}$, \ie~words of the form
$$00 (01)^{k-1} 00 (01)^{k-1} 00 (01)^{k-1} 000.$$
Assume, contrary to what we want to prove, that such a word contains a factor of the form $u_1 u_2 u_3 u_4 u_5$ where all the $u_i$ have the same lengths and are pairwise distinct. In the following, we say that a binary word is \emph{alternating} if (and only if) it belongs to the set $\{\varepsilon,1\}(01)^*\{\varepsilon,0\}$.

Note that since the $(01)^{k-1}$ patterns in $w_k$ appear at even distances and since $|u_1u_2|$ and $|u_3u_4|$ have even lengths, only one of the three words $u_1, u_3, u_5$ can be alternating; otherwise, two of these words would coincide. Similarly, either $u_2$ or $u_4$ must not be alternating.
On the contrary, since $|w_k| = 6k + 3 \equiv \modd{3} {5}$, either the prefix $00$ or the suffix $00$ of $w_k$ does not overlap with the occurrence of $u_1 u_2 u_3 u_4 u_5$. Considering all possible factorizations of $w_k$ into five consecutive blocks of equal lengths, it is easy to see that at least two of the words $u_1, u_2, u_3, u_4, u_5$ must be alternating.
Together with the previous observation, this means that exactly one of the words $u_1, u_3, u_5$ and exactly of the words $u_2, u_4$ is alternating.
We now analyze the six possible cases.

First, suppose that $u_1$ and $u_2$ are alternating. This implies that the $u_i$ have length at most $k - 1$ because every factor of length $2k$ of $w_k$ contains $00$ as a prefix or $00$ as a suffix or $000$ as a factor. But then, it is easy to see that at either $u_3$ or $u_4$ are alternating as well, a contradiction.
A similar argument applies whenever two adjacent blocks are alternating.
The remaining cases are that $u_1, u_4$ are alternating or $u_2, u_5$ are alternating. Since they are symmetrical, it suffices to investigate the first case.
Considering the possible positions of $u_1$ and $u_4$ in $w_k$, it becomes obvious that then, either $u_2$ or $u_3$ has to be alternating as well, again contradicting the observation above.

This concludes the correctness proof of the construction.
Similar arguments can be used to prove avoidance of $5$-antipowers for the lower bound witnesses given for $i \in \{1, \dots, 9\}$.
\end{proof}

\section{Open problems}

The exact asymptotic behavior of $N(k, r)$ remains open.
To the best of our knowledge, the best general upper bound known to date is linear in $k$ and quartic in $r$ \cite{Riasat19}, a bound that actually holds for all alphabet sizes.
However, it is consistent with this result and our conjecture that the asymptotic behavior is of the form $(2r - 4)k + f(r)$ for some function $f(r) \in \mathcal{O}(r^4)$.
While still linear in $k$ and quartic in $r$, proving an upper bound of the form $ckr + f(r)$ for some constant $c \in \mathbb{N}$ and some function $f(r) \in \mathcal{O}(r^4)$ would be a big step towards proving our conjecture.

It would also be interesting to investigate the growth of $f$ more carefully.
Can our lower bound be improved to show that $f$ is unbounded? Can we prove a subquartic upper bound?
More numerical evidence might help establish a conjecture on the growth rate of $f$.

\bibliographystyle{new2}
\bibliography{antipowers}

\end{document}